\newcommand{\Rbb}{\mathbb{R}} \newcommand{\Cbb}{\mathbb{C}}
 \newcommand{\Prob}{{\mathbb{P}}}
\newcommand{\scp}[2]{\langle #1, #2 \rangle}
\newtheorem{theorem}{Theorem} \newtheorem{definition}{Definition}
 \newtheorem{lemma}{Lemma}
 \newtheorem{corollary}{Corollary}
\renewcommand{\leq}{\leqslant} \renewcommand{\geq}{\geqslant}
\newcommand{\norm}[1]{\|#1\|} \newcommand{\abs}[1]{\left| #1 \right|}
\newcommand{\ma}[1]{\mathsf{#1}}
\newenvironment{bmcformat}{\baselineskip14pt\setboolean{publ}{false}}
\begin{document}
\begin{bmcformat}

%
%
%
%
\title{Universal and efficient compressed sensing by spread spectrum and application to realistic Fourier imaging techniques}

\author
{%
Gilles Puy$^{1, 2}$\email{Gilles Puy\correspondingauthor - gilles.puy@epfl.ch}\and
Pierre Vandergheynst$^{1}$\email{Pierre Vandergheynst - pierre.vandergheynst@epfl.ch}\and
R\'{e}mi Gribonval$^{3}$\email{R\'{e}mi Gribonval - remi.gribonval@inria.fr}\and
Yves Wiaux$^{1, 3, 4}$\email{Yves Wiaux - yves.wiaux@epfl.ch}
}

\address
{%
\iid(1)Institute of Electrical Engineering,  Ecole Polytechnique F{\'e}d{\'e}rale de Lausanne (EPFL), CH-1015 Lausanne, Switzerland.\\
\iid(2)Institute of the Physics of Biological Systems,  Ecole Polytechnique F{\'e}d{\'e}rale de Lausanne (EPFL), CH-1015 Lausanne, Switzerland.\\
\iid(3)Centre de Recherche INRIA Rennes-Bretagne Atlantique, F-35042 Rennes cedex, France.\\
\iid(4)Institute of Bioengineering,  Ecole Polytechnique F{\'e}d{\'e}rale de Lausanne (EPFL), CH-1015 Lausanne, Switzerland.\\
\iid(5)Department of Radiology and Medical Informatics, University of Geneva (UniGE), CH-1211 Geneva, Switzerland.
}

\maketitle

\begin{abstract}
We advocate a compressed sensing strategy that consists of multiplying the signal of interest by a wide bandwidth modulation before projection onto randomly selected vectors of an orthonormal basis. Firstly, in a digital setting with random modulation, considering a whole class of sensing bases including the Fourier basis, we prove that the technique is \emph{universal} in the sense that the required number of measurements for accurate recovery is optimal and independent of the sparsity basis. This universality stems from a drastic decrease of coherence between the sparsity and the sensing bases, which for a Fourier sensing basis relates to a spread of the original signal spectrum by the modulation (hence the name ``spread spectrum''). The approach is also \emph{efficient} as sensing matrices with fast matrix multiplication algorithms can be used, in particular in the case of Fourier measurements. Secondly, these results are confirmed by a numerical analysis of the phase transition of the $\ell_1$-minimization problem. Finally, we show that the spread spectrum technique remains effective in an analog setting with chirp modulation for application to realistic Fourier imaging. We illustrate these findings in the context of radio interferometry and magnetic resonance imaging.
\end{abstract}

\ifthenelse{\boolean{publ}}{\begin{multicols}{2}}{}
%

%
%
%
%
\section{Introduction}
\label{sec:introduction}

In this section we concisely recall some basics of compressed sensing, emphasizing on the role of mutual coherence between the sparsity and sensing bases. We discuss the interest of improving the standard acquisition strategy in the context of Fourier imaging techniques such as radio interferometry and magnetic resonance imaging (MRI). Finally, we highlight the main contributions of our work, advocating a universal and efficient compressed sensing strategy coined spread spectrum, and describe the organization of this article.

\subsection{Compressed sensing basics}

Compressed sensing is a recent theory aiming at merging data acquisition and compression \cite{candes06a, donoho06, baraniuk07, candes06b, donoho09, candes07, rauhut10}. It predicts that sparse or compressible signals can be recovered from a small number of linear and non-adaptative measurements. In this context, Gaussian and Bernouilli random matrices, respectively with independent standard normal and $\pm 1$ entries, have encountered a particular interest as they provide optimal conditions in terms of the number of measurements needed to recover sparse signals \cite{baraniuk07, candes06b, donoho09}. However, the use of these matrices for real-world applications is limited for several reasons: no fast matrix multiplication algorithm is available, huge memory requirements for large scale problems, difficult implementation on hardware, etc.

Let us consider $s$-sparse digital signals $\bm x \in \Cbb^N$ in an orthonormal basis $\ma \Psi = (\bm \psi_1, ..., \bm \psi_N) \in \Cbb^{N \times N}$. The decomposition of $\bm x$ in this basis is denoted $\bm \alpha= \left(\alpha_i\right)_{1\leq i \leq N} \in \Cbb^N$, $\bm{\alpha}= \ma{\Psi}^* \bm{x}$ ($\cdot^*$ denotes the conjugate transpose), and contains $s$ non-zero entries. The original signal $\bm{x}$ is then probed by projection onto $m$ randomly selected vectors of another orthonormal basis $\ma{\Phi} = (\bm \phi_1, ..., \bm \phi_N) \in \Cbb^{N \times N}$. The indices $\Omega = \left\{ l_1, \ldots, l_m\right\}$ of the selected vectors are chosen independently and uniformly at random from $\left\{1, \ldots, N\right\}$. We denote $\ma \Phi^*_\Omega$ the $m \times N$ matrix made of the selected rows of $\ma \Phi^*$. The measurement vector $\bm y \in \Cbb^m$ thus reads as
\begin{eqnarray}
\label{eq:measurement model}
\bm y = \ma{A}_\Omega \, \bm \alpha \text{ with } \ma{A}_\Omega = \ma{\Phi}^*_\Omega \ma{\Psi} \in \Cbb^{m \times N}.
\end{eqnarray}
We also denote $\ma{A} = \ma{\Phi}^*\ma{\Psi} \in \Cbb^{N \times N}$. Finally, we aim at recovering $\bm{\alpha}$ by solving the $\ell_1$-minimization problem
\begin{eqnarray}
\label{eq:BP}
\bm{\alpha}^\star = \arg\min_{\bm{\alpha} \in \Cbb^N} \norm{\bm{\alpha}}_1 \text{  subject to  }  \bm y=\ma{A}_\Omega \bm{\alpha},
\end{eqnarray}
where $\norm{\bm{\alpha}}_1 = \sum_{i=1}^N \abs{\alpha_i}$ ($\abs{\cdot}$ denotes the complex magnitude). The reconstructed signal $\bm{x}^\star$ satisfies $\bm{x}^\star = \ma{\Psi}\bm{\alpha}^\star$.

The theory of compressed sensing already demonstrates that a small number $m\ll N$ of random measurements are sufficient for an accurate and stable reconstruction of $\bm x$ \cite{candes07, rauhut10}. However, the recovery conditions depend on the mutual coherence $\mu$ between $\ma \Phi$ and $\ma \Psi$. This value is a similarity measure between the sensing and sparsity bases. It is defined as $\mu = \max_{1\leq i,j \leq N} \left| \scp{\bm \phi_i}{\bm \psi_j}\right|$ and satisfies $N^{-1/2}\leq\mu\leq1$. The performance is optimal when the bases are perfectly incoherent, i.e. $\mu = N^{-1/2}$, and unavoidably decreases when $\mu$ increases.

%
%
%
%
\subsection{Fourier imaging applications and mutual coherence}
The dependence of performance on the mutual coherence $\mu$ is a key concept in compressed sensing. It has significant implications for Fourier imaging applications, in particular radio interferometry or magnetic resonance imaging (MRI), where signals are probed in the orthonormal Fourier basis. In radio interferometry, one of the main challenges is to reconstruct accurately the original signal from a limited number of accessible measurements \cite{hogbom74, cornwell85, wiaux09, wenger10, li11}. In MRI, accelerating the acquisition process by reducing the number of measurements is of huge interest { in, for example, static and dynamic imaging \cite{lustig07, jung07, gamper08, jung09, usman11}, parallel MRI \cite{liang09a, lustig10, otazo10}, or MR spectroscopic imaging \cite{hu08, hu10, larson10}}. The theory of compressed sensing shows that Fourier acquisition is the best sampling strategy when signals are sparse in the Dirac basis. The sensing system is indeed optimally incoherent. Unfortunately, natural signals are usually rather sparse in multi-scale bases, e.g., wavelet bases, which are coherent with the Fourier basis. Many measurements are thus needed to reconstruct accurately the original signal. In the perspective of accessing better performance, sampling strategies that improve the incoherence of the sensing scheme should be considered.

\subsection{Main contributions and organization}
In the present work, we advocate a compressed sensing strategy coined spread spectrum that consists of a wide bandwidth pre-modulation of the signal $\bm x$ before projection onto randomly selected vectors of an orthonormal basis. In the particular case of Fourier measurements, the pre-modulation amounts to a convolution in the Fourier domain which spreads the power spectrum of the original signal $\bm x$ (hence the name ``spread spectrum''), while preserving its norm. Equivalently, this spread spectrum phenomenon acts on each sparsity basis vector describing $\bm x$ so that information of each of them is accessible whatever the Fourier coefficient selected. This effect implies a decrease of coherence between the sparsity and sensing bases and enables an enhancement of the reconstruction quality.

In Section \ref{sec:discrete spread spectrum}, we study the spread spectrum technique in a digital setting for arbitrary pairs of sensing and sparsity bases $\left(\ma \Phi, \ma \Psi \right)$. We consider a digital pre-modulation  $\bm c = (c_l)_{1\leq l \leq N} \in \Cbb^N$ with $\abs{c_l} = 1$ and random phases identifying a random Rademacher or Steinhaus sequence. We show that the recovery conditions do not depend anymore on the coherence of the system but on a new parameter $\beta\left(\ma{\Phi}, \ma{\Psi}\right)$ called \emph{modulus-coherence} and defined as
\begin{eqnarray}
\label{eq:modulated coherence}
\beta\left(\ma{\Phi}, \ma{\Psi}\right) = \max_{1\leq i, j \leq N} \sqrt{\sum_{k=1}^N \left| \phi_{ki}^* \psi_{kj} \right|^2},
\end{eqnarray}
where $\phi_{ki}$ and $\psi_{kj}$ are respectively the $k^{\text{th}}$ entries of the vectors $\bm{\phi}_{i}$ and $\bm{\psi}_{j}$. We then show that this parameter reaches its optimal value $\beta\left(\ma{\Phi}, \ma{\Psi}\right) = N^{-1/2}$ whatever the sparsity basis $\Psi$, for particular sensing matrices $\Phi$ including the Fourier matrix, thus providing \emph{universal} recovery performances.  It is also \emph{efficient} as sensing matrices with fast matrix multiplication algorithms can be used, thus reducing the need in memory requirement and computational power. In Section \ref{sec:Numerical simulations}, these theoretical results are confirmed numerically through an analysis of the { empirical} phase transition of the $\ell_1$-minimization problem for different pairs of sensing and sparsity bases. In Section \ref{sec:analog spread spectrum}, we show that the spread spectrum technique remains effective in an analog setting with chirp modulation for application to realistic Fourier imaging, and illustrate these findings in the context of radio interferometry { and MRI}. Finally, we conclude in Section \ref{sec:conclusion}.

{ In the context of compressed sensing, the spread spectrum technique was already briefly introduced by the authors for compressive sampling of pulse trains in \cite{naini09}, applied to radio interferometry in \cite{wiaux09b, wiaux09a} and to MRI in \cite{puy10a, wiaux10, puy11a, puy11b}. This paper provides theoretical foundations for this technique, both in the digital and analog settings. Note that other acquisition strategies can be related to the spread spectrum technique as discussed in Section \ref{sub:related}.

Let us also acknowledge that spread spectrum techniques are very popular in telecommunications. For example, one can cite the direct sequence spread spectrum (DSSS) and the frequency hopping spread spectrum (FHSS) techniques. The former is sometimes used over wireless local area networks, the latter is used in Bluetooth systems \cite{schiller03}. In general, spread spectrum techniques are used for their robustness to narrowband interference and also to establish secure communications.}

%
%
%
%
\section{Compressed sensing by spread spectrum}
\label{sec:discrete spread spectrum}

In this section, we first recall the standard recovery conditions of sparse signals randomly sampled in a bounded orthonormal system. These recovery results depend on the mutual coherence $\mu$ of the system. Hence, we study the effect of a random pre-modulation on this value and deduce recovery conditions for the spread spectrum technique. We finally show that the number of measurements needed to recover sparse signals becomes universal for a family of sensing matrices $\ma \Phi$ which includes the Fourier basis.

%
%
%
%
\subsection{Recovery results in a bounded orthonormal system}

For the setting presented in Section \ref{sec:introduction}, the theory of compressed sensing already provides sufficient conditions on the number of measurements needed to recover the vector $\bm{\alpha}$ from the measurements $\bm{y}$ by solving the $\ell_1$-minimization problem (\ref{eq:BP}) \cite{candes07, rauhut10}.
\begin{theorem}[\cite{rauhut10}, Theorem 4.4]
\label{th:standard uniform recovery}
Let $\ma{A} = \ma{\Phi}^*\ma{\Psi} \in \Cbb^{N \times N}$, $\mu = \max_{1\leq i,j \leq N} \left| \scp{\bm \phi_i}{\bm \psi_j} \right|$, $\bm{\alpha} \in \Cbb^N$ be an $s$-sparse vector, { $\Omega = \left\{ l_1, \ldots, l_m\right\}$ be a set of $m$ indices chosen independently and uniformly at random from $\left\{1, \ldots, N\right\}$}, and $\bm{y} = \ma{A}_\Omega \bm{\alpha} \in \Cbb^{m}$. For some universal constants $C>0$ and $\gamma>1$, if
\begin{eqnarray}
\label{eq:standard uniform recovery}
m \geq C N \mu^2 s \log^4(N),
\end{eqnarray}
then $\bm \alpha$ is the unique minimizer of the $\ell_1$-minimization problem (\ref{eq:BP}) with probability at least $1-N^{-\gamma \log^3(N)}$.
\end{theorem}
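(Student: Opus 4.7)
The plan is to reduce the statement to the Restricted Isometry Property (RIP) for the rescaled subsampled matrix $\tilde{\ma{A}}_\Omega := \sqrt{N/m}\,\ma{A}_\Omega$, and then invoke the classical RIP-based sufficient condition for $\ell_1$-recovery. Since $\ma{\Phi}$ and $\ma{\Psi}$ are orthonormal, $\ma{A} = \ma{\Phi}^*\ma{\Psi}$ is unitary, and by definition of $\mu$ its entries satisfy $\abs{A_{ij}} \leq \mu$. Consequently the rescaled rows $\{\sqrt{N}\,\bm{a}_i^*\}_{1 \leq i \leq N}$ of $\ma{A}$ form a \emph{bounded orthonormal system} with uniform entrywise bound $K = \sqrt{N}\mu$, and uniform independent sampling of $m$ indices in $\Omega$ is precisely the sampling model for which RIP results for such systems are formulated. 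The rescaling is chosen so that $\E[\tilde{\ma{A}}_\Omega^*\tilde{\ma{A}}_\Omega] = \ma{I}$.

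The technical core I would invoke is the Rudelson--Vershynin / Rauhut RIP theorem for bounded orthonormal systems: there exist universal constants $C_1, \gamma_1 > 0$ such that, provided
\begin{equation*}
m \;\geq\; C_1 K^2 s \log^4(N),
\end{equation*}
the restricted isometry constant of order $2s$ of $\tilde{\ma{A}}_\Omega$ satisfies $\delta_{2s} < \sqrt{2} - 1$ with probability at least $1 - N^{-\gamma_1 \log^3(N)}$. Substituting $K^2 = N\mu^2$ gives precisely the hypothesis (\ref{eq:standard uniform recovery}) with a suitable constant $C$. Once the RIP is established, the classical Cand\`es--Tao / Foucart result asserts that every $s$-sparse $\bm{\alpha}$ is the unique minimizer of the $\ell_1$-problem (\ref{eq:BP}); the rescaling by $\sqrt{N/m}$ is a positive scalar that does not change either the feasible set or the minimizer, so the conclusion transfers from $\tilde{\ma{A}}_\Omega$ back to $\ma{A}_\Omega$.

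The main obstacle is therefore the proof of the RIP bound for bounded orthonormal systems itself. I would follow the chaining route: after a standard symmetrization step, control the expected supremum of the empirical process
\begin{equation*}
\E \sup_{\bm{z}} \left| \norm{\tilde{\ma{A}}_\Omega \bm{z}}_2^2 - \norm{\bm{z}}_2^2 \right|
\end{equation*}
(the supremum running over unit-norm $2s$-sparse vectors $\bm{z}$) by Dudley's integral applied to the Rademacher chaos indexed by this set, using covering number estimates for the set of sparse vectors. The interplay between the dyadic entropy integral and the uniform bound $K$ is what produces the characteristic $\log^4(N)$ factor. The sharp tail estimate $1 - N^{-\gamma \log^3(N)}$ is then obtained from Talagrand's concentration inequality for suprema of empirical processes, or equivalently via high-moment bounds combined with Markov's inequality, with the moment order calibrated to the chosen value of $m$. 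Controlling suprema of quadratic forms with the necessary heavy-tailed chaining constants is where the real difficulty lies; the rest of the argument is a clean transfer through rescaling and the RIP-to-$\ell_1$ implication.
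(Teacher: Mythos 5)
Your outline is correct, and it coincides with how the result is actually established: the paper itself gives no proof of this statement but imports it verbatim from Rauhut's Theorem~4.4, whose proof is exactly the route you describe --- recognize $\sqrt{N}\,\ma{A}$ as a bounded orthonormal system with constant $K=\sqrt{N}\mu$, prove the restricted isometry property for the rescaled subsampled matrix via the Rudelson--Vershynin/Dudley chaining bound plus a concentration/moment argument (yielding the $\log^4(N)$ factor and the $1-N^{-\gamma\log^3(N)}$ tail), and conclude by the standard RIP-to-$\ell_1$ implication, with the harmless $\sqrt{N/m}$ rescaling leaving the feasible set unchanged. No discrepancy to report.
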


Let us acknowledge that even if the measurements are corrupted by noise or if $\bm{\alpha}$ is non-exactly sparse, the theory of compressed sensing also shows that the reconstruction obtained by solving the $\ell_1$-minimization problem remains accurate and stable:
\begin{theorem}[\cite{rauhut10}, Theorem 4.4]
\label{pr:stability noise}
Let $\ma{A} = \ma{\Phi}^*\ma{\Psi}$, {$ \Omega = \left\{ l_1, \ldots, l_m\right\}$ be a set of $m$ indices chosen independently and uniformly at random from $\left\{1, \ldots, N\right\}$}, and ${\cal T}_s\left(\bm{\alpha}\right)$ be the best $s$-sparse approximation of the (possibly non-sparse) vector $\bm{\alpha} \in \Cbb^N$. Let the noisy measurements $\bm{y} = \ma{A}_\Omega \, \bm{\alpha} + \bm{n} \in \Cbb^m$ be given with $\norm{\bm{n}}_2^2 =  \sum_{i=1}^m \abs{n_i}^2 \leq \eta^2$, $\eta \geq 0$. For some universal constants $D, E >0$ and $\gamma>1$, if relation (\ref{eq:standard uniform recovery}) holds, then the solution $\bm{\alpha}^\star$ of the $\ell_1$-minimization problem
\begin{eqnarray}
\label{eq:BPDN}
\bm{\alpha}^\star = \arg\min_{\bm{\alpha} \in \Cbb^N} \norm{\bm{\alpha}}_1 \textnormal{ subject to }  \norm{\bm{y}-\ma{A}_\Omega \bm{\alpha}}_2 \leq \eta,
\end{eqnarray}
satisfies
\begin{eqnarray}
\norm{\bm{\alpha} - \bm{\alpha}^\star}_2 \leq D \, \frac{\norm{\bm{\alpha}-{\cal T}_s\left(\bm{\alpha}\right)}_1}{s^{1/2}} + E \, \eta,
\end{eqnarray}
with probability at least $1-N^{-\gamma \log^3(N)}$.
\end{theorem}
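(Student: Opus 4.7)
The plan is to derive the noisy and compressible stability guarantee from the same probabilistic core that underlies the noiseless Theorem~\ref{th:standard uniform recovery}, namely a Restricted Isometry Property (RIP) estimate for the rescaled subsampled matrix $\sqrt{N/m}\,\ma{A}_\Omega$. The measurement condition (\ref{eq:standard uniform recovery}) is in fact calibrated so that, with probability at least $1-N^{-\gamma\log^3(N)}$, one has $\delta_{2s}\bigl(\sqrt{N/m}\,\ma{A}_\Omega\bigr) \leq \delta^\star$ for a fixed constant $\delta^\star<\sqrt{2}-1$. This RIP estimate is the probabilistic heart of both theorems and is what powers Theorem~\ref{th:standard uniform recovery} behind the scenes.

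First I would invoke the RIP for randomly subsampled bounded orthonormal systems. Since $\ma{A}=\ma{\Phi}^*\ma{\Psi}$ is unitary and $\max_{i,j}\abs{A_{ij}}=\mu$, the rows of $\ma{A}$ form a bounded orthonormal system with constant $\mu\sqrt{N}$. The Rudelson--Vershynin-type bound refined by Rauhut then yields $\delta_{2s}\leq\delta^\star$ as soon as $m\geq C N\mu^2 s \log^4(N)$, on an event $\mathcal{E}$ of probability at least $1-N^{-\gamma\log^3(N)}$. Throughout the remainder of the proof I would condition on $\mathcal{E}$.

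Second, on $\mathcal{E}$ I would apply the deterministic noisy/compressible stability theorem for RIP matrices (Cand\`es, and its sharpened variants): if $\ma{B}\in\Cbb^{m\times N}$ satisfies $\delta_{2s}(\ma{B})<\sqrt{2}-1$, then for any $\bm{\alpha}\in\Cbb^N$ and any $\bm{n}$ with $\norm{\bm{n}}_2\leq\eta$, the minimizer of $\min\norm{\bm{\alpha}'}_1$ subject to $\norm{\ma{B}\bm{\alpha}'-(\ma{B}\bm{\alpha}+\bm{n})}_2\leq\eta$ obeys
\[
\norm{\bm{\alpha}-\bm{\alpha}^\star}_2 \;\leq\; D\,\frac{\norm{\bm{\alpha}-\mathcal{T}_s(\bm{\alpha})}_1}{s^{1/2}} + E\,\eta.
\]
Taking $\ma{B}=\sqrt{N/m}\,\ma{A}_\Omega$ and rescaling the data and the noise bound by the same factor $\sqrt{N/m}$, the feasibility set of (\ref{eq:BPDN}) matches that of the RIP-normalized problem, the minimizer is unchanged, and the stated bound follows (with $D,E$ absorbing the $\sqrt{N/m}$ factor into the universal constants).

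The main obstacle is the probabilistic RIP estimate in the first step: controlling $\sup_{T\subset\{1,\ldots,N\},\,\abs{T}\leq 2s}\bigl\|\tfrac{N}{m}\ma{A}_{\Omega,T}^*\ma{A}_{\Omega,T}-\ma{I}_T\bigr\|$ requires bounding the supremum of an empirical process indexed by sparse vectors via Dudley's entropy integral together with Talagrand-type concentration, and this is where the precise $\log^4(N)$ factor is paid. By comparison, the deterministic passage from RIP to the compressible/noisy error bound is technically routine, so the statement reduces almost mechanically to the RIP estimate already used for Theorem~\ref{th:standard uniform recovery}.
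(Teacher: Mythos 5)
The paper offers no proof of this statement at all---it is imported verbatim as Theorem 4.4 of \cite{rauhut10}---and your sketch correctly reproduces the standard argument by which that theorem is established in the reference: a restricted isometry estimate for the rescaled subsampled bounded orthonormal system $\sqrt{N/m}\,\ma{A}_\Omega$ (obtained via Dudley/chaining plus concentration, which is where the $\log^4(N)$ is paid), followed by the deterministic Cand\`es-type stability bound for RIP matrices. The one point worth flagging is the noise normalization: taking $\ma{B}=\sqrt{N/m}\,\ma{A}_\Omega$ rescales the noise level to $\sqrt{N/m}\,\eta$, so the factor multiplying $\eta$ in the final bound is $E\sqrt{N/m}$ rather than a truly universal constant---but this imprecision is inherited from the theorem statement as quoted in the paper, not introduced by your argument.
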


In the above theorems, the role of the mutual coherence $\mu$ is crucial as the number of measurements needed to reconstruct $\bm x$ scales quadratically with its value. In the worst case where $\ma{\Phi}$ and $\ma{\Psi}$ are identical, $\mu = 1$ and the signal $\bm x$ is probed in a domain where it is also sparse. According to relation (\ref{eq:standard uniform recovery}), the number of measurements necessary to recover $\bm x$ is of order $N$. This result is actually very intuitive. For an accurate reconstruction of signals sampled in their sparsity domain, all the non-zero entries need to be probed. It becomes highly probable when $m \simeq N$. On the contrary, when $\ma{\Phi}$ and $\ma{\Psi}$ are as incoherent as possible, i.e., $\mu = N^{-1/2}$, the energy of the sparsity basis vectors spreads equally over the sensing basis vectors. Consequently, whatever the sensing basis vector selected, one always gets information of all the sparsity basis vectors describing the signal $\bm x$, therefore reducing the need in the number of measurements. This is confirmed by relation (\ref{eq:standard uniform recovery}) which shows that the number of measurements is of the order of $s$ when $\mu = N^{-1/2}$. To achieve much better performance when the mutual coherence is not optimal, one would naturally try to modify the measurement process to achieve a better global incoherence. We will see in the next section that a simple random pre-modulation is an efficient way to achieve this goal whatever the sparsity matrix $\ma{\Psi}$.

%
%
%
%
\subsection{Pre-modulation effect on the mutual coherence}
\label{sub:modulation effect on coherence}

The spread spectrum technique consists of pre-modulating the signal $\bm x$ by a wide-band signal $\bm c = (c_l)_{1\leq l \leq N} \in \Cbb^N$, with $\abs{c_l} = 1$ and random phases, before projecting the resulting signal onto $m$ vectors of the basis $\Phi$. The measurement vector $\bm y$ thus satisfies
\begin{eqnarray}
\label{eq:spread measurement matrix}
\bm y = \ma{A}^{\rm c}_\Omega \, \bm \alpha \text{ with } \ma{A}^{\rm c}_\Omega = \ma{\Phi^*_\Omega C\Psi} \in \Cbb^{m \times N},
\end{eqnarray}
where the additional matrix $\ma{C} \in \Rbb^{N \times N}$ stands for the diagonal matrix associated to the sequence $\bm c$.

In this setting, the matrix $\ma{A}^{\rm c}$ is orthonormal. Therefore, the recovery condition of sparse signals sampled with this matrix depends on the mutual coherence $\mu = \max_{1\leq i,j \leq N} \left| \scp{\bm \phi_i}{\ma{C}\,\bm \psi_j}\right|$. With a pre-modulation by a random Rademacher or Steinhaus sequence, Lemma \ref{lm:coherence bound} shows that the mutual coherence $\mu$ is essentially bounded by the modulus-coherence $\beta\left(\ma{\Phi}, \ma{\Psi}\right)$ defined in equation (\ref{eq:modulated coherence}).

\begin{lemma}
\label{lm:coherence bound}
Let { $\epsilon \in (0, 1)$}, $\bm{c} \in \Cbb^{N}$ be a random Rademacher or Steinhaus sequence and $\ma{C} \in \Cbb^{N \times N}$ be the associated diagonal matrix. Then, the mutual coherence $\mu= \max_{1\leq i,j \leq N} \left| \scp{\bm \phi_i}{\ma{C}\,\bm \psi_j}\right|$ satisfies
\begin{eqnarray}
\label{eq:coherence bound}
\mu \leq \beta\left(\ma{\Phi}, \ma{\Psi}\right) \sqrt{2\log\left(2N^2/\epsilon\right)},
\end{eqnarray}
with probabilty at least $1 - \epsilon$.
\end{lemma}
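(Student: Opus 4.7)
The plan is to bound $\mu$ by first controlling each individual inner product $\langle \bm\phi_i, \ma{C}\,\bm\psi_j\rangle$ via a concentration inequality, and then taking a union bound over the $N^2$ index pairs.

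First, I would rewrite the inner product as a weighted sum of the entries of $\bm c$. For fixed $i,j$,
\begin{equation*}
Z_{ij} := \langle \bm\phi_i, \ma{C}\,\bm\psi_j\rangle = \sum_{k=1}^{N} c_k \, \phi_{ki}^* \psi_{kj} = \sum_{k=1}^{N} c_k\, a_k^{(i,j)},
\qquad a_k^{(i,j)} := \phi_{ki}^* \psi_{kj}.
\end{equation*}
Since the $c_k$ are independent Rademacher ($\pm 1$ equiprobable) or Steinhaus (uniform phases) variables, $Z_{ij}$ is a sum of independent centered random variables with deterministic complex coefficients $a_k^{(i,j)}$. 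Crucially, by the definition of the modulus-coherence,
\begin{equation*}
\sigma_{ij}^2 \;:=\; \sum_{k=1}^N \bigl|a_k^{(i,j)}\bigr|^2 \;=\; \sum_{k=1}^N \bigl|\phi_{ki}^*\psi_{kj}\bigr|^2 \;\leq\; \beta(\ma\Phi,\ma\Psi)^2.
\end{equation*}

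Next, I would invoke the Hoeffding-type inequality for complex Rademacher and Steinhaus sums (a standard result, e.g.\ the one used in \cite{rauhut10}): for any $u>0$,
\begin{equation*}
\Prob\bigl(|Z_{ij}| \geq u\bigr) \;\leq\; 2\exp\!\Bigl(-\tfrac{u^2}{2\sigma_{ij}^2}\Bigr) \;\leq\; 2\exp\!\Bigl(-\tfrac{u^2}{2\beta(\ma\Phi,\ma\Psi)^2}\Bigr).
\end{equation*}
(For the Rademacher case this follows from applying real-valued Hoeffding separately to the real and imaginary parts after symmetrizing; for Steinhaus, one uses rotation invariance and an analogous moment-generating-function argument. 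Both cases give the same form with the same constants.)

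Finally, I would apply a union bound over the $N^2$ pairs $(i,j) \in \{1,\dots,N\}^2$ to obtain
\begin{equation*}
\Prob\bigl(\mu \geq u\bigr) \;\leq\; \sum_{i,j} \Prob\bigl(|Z_{ij}| \geq u\bigr) \;\leq\; 2N^2 \exp\!\Bigl(-\tfrac{u^2}{2\beta(\ma\Phi,\ma\Psi)^2}\Bigr),
\end{equation*}
and choose $u = \beta(\ma\Phi,\ma\Psi)\sqrt{2\log(2N^2/\epsilon)}$ so that the right-hand side equals $\epsilon$. This yields the claimed bound with probability at least $1-\epsilon$.

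The only nontrivial ingredient is the concentration inequality for complex Rademacher/Steinhaus sums with the clean constants $(2, 1/2)$ appearing in the exponent; this is the step that has to be invoked rather than proved from scratch, and it is what forces the factor $\sqrt{2\log(2N^2/\epsilon)}$ (as opposed to a slightly larger constant) in the final bound. Everything else is a routine union bound combined with the definition of $\beta(\ma\Phi,\ma\Psi)$.
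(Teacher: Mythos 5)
Your proposal follows exactly the same route as the paper's own proof: rewrite $\scp{\bm\phi_i}{\ma{C}\bm\psi_j}$ as the weighted sum $\sum_k c_k a_k^{ij}$, apply the complex Hoeffding inequality with variance proxy $\norm{\bm a^{ij}}_2^2 \leq \beta^2(\ma\Phi,\ma\Psi)$, take a union bound over the $N^2$ pairs, and choose $u = \beta(\ma\Phi,\ma\Psi)\sqrt{2\log(2N^2/\epsilon)}$. The argument is correct and matches the paper step for step, including the treatment of the Rademacher and Steinhaus cases via the same tail bound.
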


The proof of Lemma \ref{lm:coherence bound} relies on a simple application of the Hoeffding's inequality and the union bound.
\begin{proof}
We have $\scp{\bm \phi_i}{\ma{C} \, \bm \psi_j} = \sum_{k=1}^N c_k \phi_{ki}^* \psi_{kj} = \sum_{k=1}^N c_k a_k^{ij}$, where $a_k^{ij} = \phi_{ki}^* \psi_{kj}$. An application of the Hoeffding's inequality shows that
\begin{eqnarray*}
\Prob \left( \left| \scp{\bm{\phi}_i}{\ma{C}\bm{\psi}_j} \right| > u \right) \leq 2 \exp\left(-\frac{u^2}{2 \norm{\bm{a}^{ij}}_2^2}\right),
\end{eqnarray*}
for all $u>0$ and $1 \leq i, j \leq N$, with $\norm{\bm{a}^{ij}}_2^2 = \sum_{k=1}^N \left| a_k^{ij} \right|^2$.
The union bound then yields
\begin{eqnarray*}
\Prob \left( \mu > u \right) 
& \leq & \sum_{1 \leq i, j \leq N} \Prob \left( \left| \scp{\bm \phi_i}{\bm c \cdot \bm \psi_j} \right| > u \right) \\
& \leq &  2 \sum_{1 \leq i, j \leq N} \exp\left(-\frac{u^2}{2 \norm{\bm{a}^{ij}}_2^2}\right),
\end{eqnarray*}
for all $u>0$. As $\beta^2\left(\ma{\Phi}, \ma{\Psi}\right) = \max_{1\leq i, j \leq N} \sum_{k=1}^N  \left| a_k^{ij} \right|^2$ then $\norm{\bm{a}^{ij}}_2^2 \leq \beta^2\left(\ma{\Phi}, \ma{\Psi}\right)$ for all $1 \leq i, j \leq N$, and the previous relation becomes
\begin{eqnarray*}
\Prob \left( \mu > u \right) & \leq & 2 N^2 \exp\left(-\frac{u^2}{2 \beta^2\left(\ma{\Phi}, \ma{\Psi}\right)}\right),
\end{eqnarray*}
for all $u>0$. Taking $u=\sqrt{2\beta^2\left(\ma{\Phi}, \ma{\Psi}\right)\log\left(2N^2/\epsilon\right)}$ terminates the proof.
\end{proof}
%

%
%
%
%
\subsection{Sparse recovery with the spread spectrum technique}

Combining Theorem \ref{th:standard uniform recovery} with the previous estimate on the mutual coherence, we can state the following theorem:

\begin{theorem}
\label{th:spread spectrum uniform recovery}
Let $\bm{c} \in \Cbb^{N}$, with $N\geq2$, be a random Rademacher or Steinhaus sequence, $\ma{C} \in \Cbb^{N \times N}$ be the diagonal matrix associated to $\bm{c}$, $\bm{\alpha} \in \Cbb^N$ be an $s$-sparse vector, { $\Omega = \left\{ l_1, \ldots, l_m\right\}$ be a set of $m$ indices chosen independently and uniformly at random from $\left\{1, \ldots, N\right\}$}, and $\bm{y} = \ma{A}^{\rm c}_\Omega \bm{\alpha} \in \Cbb^{m}$, with $\ma A^{\rm c} = \ma{\Phi^* C\Psi}$. For some constants $0 < \rho < \log^3(N)$ and $C_{\rho} > 0$, if
\begin{eqnarray}
\label{eq:spread spectrum uniform recovery}
m \geq C_{\rho} \, N \beta^2\left(\ma{\Phi}, \ma{\Psi}\right) s \log^5(N),
\end{eqnarray}
then $\bm \alpha$ is the unique minimizer of the $\ell_1$-minimization problem (\ref{eq:BP}) with probability at least $1-{\cal O}\left(N^{-\rho}\right)$.
\end{theorem}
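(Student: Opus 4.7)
The plan is to combine Theorem \ref{th:standard uniform recovery} (standard uniform recovery in a bounded orthonormal system) with Lemma \ref{lm:coherence bound} (the coherence bound under random pre-modulation), by conditioning on the two independent sources of randomness: the Rademacher/Steinhaus sequence $\bm c$ and the index set $\Omega$. The key structural observation is that, since $\ma C$ is diagonal with unit-modulus entries, $\ma C \ma \Psi$ is still an orthonormal basis. Therefore $\ma A^{\rm c} = \ma \Phi^* \ma C \ma \Psi$ is unitary, and the pair $(\ma \Phi, \ma C \ma \Psi)$ forms a bounded orthonormal system to which Theorem \ref{th:standard uniform recovery} applies directly, with mutual coherence $\mu = \max_{1\leq i,j \leq N}\abs{\scp{\bm \phi_i}{\ma C \bm \psi_j}}$.

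First I would fix a realization of $\bm c$ and invoke Theorem \ref{th:standard uniform recovery} with the sensing matrix $\ma A^{\rm c}$: whenever $m \geq C N \mu^2 s \log^4(N)$, the vector $\bm \alpha$ is recovered by the $\ell_1$-program with probability at least $1 - N^{-\gamma \log^3(N)}$ over the random draw of $\Omega$. This part requires nothing new, only the verification that the orthonormality hypothesis of Theorem \ref{th:standard uniform recovery} is preserved by the pre-modulation.

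Next I would invoke Lemma \ref{lm:coherence bound} to replace the data-dependent quantity $\mu$ by the deterministic modulus-coherence $\beta(\ma \Phi, \ma \Psi)$: with probability at least $1-\epsilon$ over $\bm c$, one has $\mu^2 \leq 2\,\beta^2(\ma \Phi, \ma \Psi) \log(2N^2/\epsilon)$. Substituting this bound into the sampling condition of Theorem \ref{th:standard uniform recovery} yields the sufficient condition $m \geq 2C\, N \beta^2(\ma \Phi, \ma \Psi)\, s\, \log^4(N)\,\log(2N^2/\epsilon)$.

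Finally I would pick $\epsilon = N^{-\rho}$, so that $\log(2N^2/\epsilon) = (2+\rho)\log N + \log 2$, which can be absorbed into a single factor of $\log N$ at the cost of a constant $C_\rho$ depending on $\rho$; this produces the announced $\log^5(N)$ factor. A union bound over the two independent events (coherence bound for $\bm c$, sparse recovery for $\Omega$) gives a total failure probability at most $N^{-\gamma \log^3(N)} + N^{-\rho}$, which is $\mathcal{O}(N^{-\rho})$ precisely under the restriction $0<\rho<\log^3(N)$ that is imposed to keep the Theorem \ref{th:standard uniform recovery} term negligible. The argument is essentially bookkeeping; the only mildly delicate step is the conditioning on $\bm c$, but since $\bm c$ and $\Omega$ are independent, the tower property makes this routine, and I expect no serious obstacle.
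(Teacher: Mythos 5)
Your proposal is correct and follows essentially the same route as the paper's proof: verify that pre-modulation preserves orthonormality, apply Theorem \ref{th:standard uniform recovery} conditionally, control $\mu$ via Lemma \ref{lm:coherence bound} with $\epsilon = N^{-\rho}$, absorb $\log(2N^2/\epsilon)$ into $C_\rho \log(N)$, and combine the two failure probabilities (the paper phrases this through conditional probabilities on the events $X$ and $Y$ rather than an explicit union bound, but the bookkeeping is identical). No gaps.
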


\begin{proof}
{ It is straightforward to check that $\ma{C}^*\ma{C} = \ma{C}\ma{C}^* = \ma{I}$ where $\ma{I}$ is the identity matrix. The matrix $\ma{A}^{\rm c} = \ma{\Phi^* C\Psi}$ is thus orthonormal and Theorem \ref{th:standard uniform recovery} applies. To keep the notations simple, let us denote $F$ the event of failure of the $\ell_1$-minimization problem (\ref{eq:BP}), $X$ the event $m \geq C N \mu^2 s \log^4(N)$, and $Y$ the event $\beta\left(\ma{\Phi}, \ma{\Psi}\right) \sqrt{2\log\left(2N^2/\epsilon\right)} \geq \mu$. According to Theorem \ref{th:standard uniform recovery} and Lemma \ref{lm:coherence bound}, the probability of $F$ given $X$ satisfies $\Prob(F \vert X) \leq N^{-\gamma \log^3(N)}$ and the probability of $Y$ satisfies $\Prob(Y) \geq 1 - \epsilon$.

We will shortly see that for a proper choice of $\epsilon$, when condition (\ref{eq:spread spectrum uniform recovery}) holds, we have 
\begin{eqnarray}
\label{eq:cond1m}
m \geq 2C \, N \beta^2\left(\ma{\Phi}, \ma{\Psi}\right) s \log\left(2N^2/\epsilon\right) \log^4(N).
\end{eqnarray}
Using this fact, we compute the probability of failure $\Prob(F)$ of the $\ell_1$ minimization problem. We start by noticing that
\begin{eqnarray*}
\Prob(F) = \Prob(F \vert X) \Prob(X) + \Prob(F \vert X^c) \Prob(X^c) \leq \Prob(F \vert X) + \Prob(X^c) \leq N^{-\gamma \log^3(N)} + \Prob(X^c),
\end{eqnarray*}
where $X^c$ denotes the complement of event $X$. In the first inequality, the probability $\Prob(X)$ and $\Prob(F \vert X^c)$ are saturated to $1$. One can also note that if $\beta\left(\ma{\Phi}, \ma{\Psi}\right) \sqrt{2\log\left(2N^2/\epsilon\right)} \geq \mu$, i.e., $Y$ occurs, condition (\ref{eq:cond1m}) implies that $m \geq C N \mu^2 s \log^4(N)$, i.e., $X$ occurs. Therefore $\Prob(X \vert Y) = 1$, $\Prob(X^c \vert Y) = 0$ and 
\begin{eqnarray*}
\Prob(X^c) = \Prob(X^c \vert Y) \Prob(Y) + \Prob(X^c \vert Y^c) \Prob(Y^c) = \Prob(X^c \vert Y^c) \Prob(Y^c) \leq \Prob \left(Y^c\right) \leq \epsilon.
\end{eqnarray*}

The probability of failure is thus bounded above by $N^{-\gamma \log^3(N)} + \epsilon$. Consequently, if condition (\ref{eq:cond1m}) holds with $\epsilon = N^{-\rho}$ and $0 < \rho < \log^3(N)$, $\bm \alpha$ is the unique minimizer of the $\ell_1$-minimization problem (\ref{eq:BP}) with probability at least $1-{\cal O}(N^{-\rho})$. 

Finally, noticing that for $\epsilon = N^{-\rho}$ with $N \geq 2$, condition (\ref{eq:cond1m}) always holds when condition (\ref{eq:spread spectrum uniform recovery}), with $C_\rho= 2(3+\rho)C$, is satisfied, terminates the proof.}
\end{proof}

Note that relation (\ref{eq:spread spectrum uniform recovery}) also ensures the stability of the spread spectrum technique relative to noise and compressibility by combination of Theorem \ref{pr:stability noise} and Lemma \ref{lm:coherence bound}.

%
%
%
%
\subsection{Universal sensing bases with ideal modulus-coherence}

Theorem \ref{th:spread spectrum uniform recovery} shows that the performance of the spread spectrum technique is driven by the modulus-coherence $\beta\left(\ma{\Phi}, \ma{\Psi}\right)$. In general the spread spectrum technique is not universal and the number of measurements required for accurate reconstructions depends on the value of this parameter. 

\begin{definition} (Universal sensing basis)
An orthonormal basis $\ma{\Phi} \in \Cbb^{N \times N}$ is called a universal sensing basis if all its entries $\phi_{ki}$, $1\leq k, i \leq N$, are of equal complex magnitude.
\end{definition}

For universal sensing bases, e.g. the Fourier transform or the Hadamard transform, we have $\left| \phi_{ki} \right| = N^{-1/2}$ for all $1 \leq k, i \leq N$. It follows that $\beta\left(\ma{\Phi}, \ma{\Psi}\right) = N^{-1/2}$ and $\mu \simeq N^{-1/2}$, i.e. its optimal value up to a logarithmic factor, whatever the sparsity matrix considered! For such sensing matrices, the spread spectrum technique is thus a simple and efficient way to render a system incoherent independently of the sparsity matrix.

\begin{corollary} (Spread spectrum universality)
\label{co:spread spectrum universality}
Let $\bm{c} \in \Cbb^{N}$, with $N \geq 2$, be a random Rademacher or Steinhaus sequence, $\ma{C} \in \Cbb^{N \times N}$ be the diagonal matrix associated to $\bm{c}$, $\bm{\alpha} \in \Cbb^N$ be an $s$-sparse vector, { $\Omega = \left\{ l_1, \ldots, l_m\right\}$ be a set of $m$ indices chosen independently and uniformly at random from $\left\{1, \ldots, N\right\}$}, and $\bm{y} = \ma{A}^{\rm c}_\Omega \bm{\alpha} \in \Cbb^{m}$, with $\ma{A}^{\rm c} = \ma{\Phi^* C\Psi}$. For some constants $0 < \rho < \log^3(N)$, $C_{\rho} > 0$, and universal sensing bases $\ma{\Phi} \in \Cbb^{N \times N}$, if
\begin{eqnarray}
\label{eq:spread spectrum universality}
m \geq C_{\rho} \, s \log^5(N),
\end{eqnarray}
then $\bm \alpha$ is the unique minimizer of the $\ell_1$-minimization problem (\ref{eq:BP}) with probability at least $1-{\cal O}\left(N^{-\rho}\right)$.
\end{corollary}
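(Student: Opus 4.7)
The plan is to apply Theorem \ref{th:spread spectrum uniform recovery} after computing the modulus-coherence $\beta(\ma{\Phi}, \ma{\Psi})$ explicitly for any universal sensing basis $\ma{\Phi}$. The entire corollary reduces to showing that the factor $N\beta^2(\ma{\Phi}, \ma{\Psi})$ appearing in the recovery condition (\ref{eq:spread spectrum uniform recovery}) collapses to a constant (in fact, exactly $1$) whenever $\ma{\Phi}$ is universal, independently of the sparsity basis $\ma{\Psi}$.

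First, I would pin down the magnitude of the entries of a universal sensing basis. Since $\ma{\Phi}$ is orthonormal, each of its columns $\bm{\phi}_i$ has unit $\ell_2$-norm, so $\sum_{k=1}^N |\phi_{ki}|^2 = 1$. The defining property of a universal basis requires all $N$ entries of that column to share a common modulus, which therefore must equal $N^{-1/2}$. Hence $|\phi_{ki}| = N^{-1/2}$ for all $1\leq k,i\leq N$.

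Next, I would substitute this into the definition (\ref{eq:modulated coherence}) of $\beta$. For any pair of indices $i,j$,
\begin{equation*}
\sum_{k=1}^N |\phi_{ki}^* \psi_{kj}|^2 \;=\; \sum_{k=1}^N |\phi_{ki}|^2 |\psi_{kj}|^2 \;=\; \frac{1}{N}\sum_{k=1}^N |\psi_{kj}|^2 \;=\; \frac{1}{N},
\end{equation*}
where the final equality uses that $\bm{\psi}_j$ has unit $\ell_2$-norm (as a column of the orthonormal basis $\ma{\Psi}$). Taking the maximum over $i,j$ and then the square root yields $\beta(\ma{\Phi}, \ma{\Psi}) = N^{-1/2}$, which is its optimal value and, crucially, does not depend on $\ma{\Psi}$.

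To conclude, I would invoke Theorem \ref{th:spread spectrum uniform recovery}: with $N\beta^2(\ma{\Phi}, \ma{\Psi}) = 1$, the sufficient condition (\ref{eq:spread spectrum uniform recovery}) becomes exactly $m \geq C_\rho\, s \log^5(N)$, with the same constant $C_\rho$ and the same success probability $1 - \mathcal{O}(N^{-\rho})$. Since $\ma{A}^{\rm c} = \ma{\Phi}^*\ma{C}\ma{\Psi}$ is orthonormal (the diagonal unit-modulus matrix $\ma{C}$ is unitary), all hypotheses of Theorem \ref{th:spread spectrum uniform recovery} are met, and the corollary follows. There is no real obstacle here; the content is entirely in the simple but striking observation that universality of $\ma{\Phi}$ forces $\beta$ to its minimum value against every orthonormal $\ma{\Psi}$.
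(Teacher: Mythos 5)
Your proposal is correct and follows the same route as the paper: the paper's justification (given in the text preceding the corollary rather than as a formal proof) is precisely that a universal sensing basis has $|\phi_{ki}| = N^{-1/2}$, hence $\beta(\ma{\Phi},\ma{\Psi}) = N^{-1/2}$ for any orthonormal $\ma{\Psi}$, after which Theorem \ref{th:spread spectrum uniform recovery} gives the result with $N\beta^2 = 1$. Your write-up simply makes the short computation explicit.
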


For universal sensing bases, the spread spectrum technique is thus \emph{universal}: the recovery condition does not depend on the sparsity basis and the number of measurements needed to reconstruct sparse signals is optimal in the sense that it is reduced to the sparsity level $s$. The technique is also \emph{efficient} as the pre-modulation only requires a sample-by-sample multiplication between $\bm{x}$ and $\bm{c}$. Furthermore, fast multiplication matrix algorithms are available for several universal sensing bases such as the Fourier or Hadamard bases.

In light of Corollary \ref{co:spread spectrum universality}, one can notice that sampling sparse signals in the Fourier basis is a universal encoding strategy whatever the sparsity basis $\ma{\Psi}$ - even if the original signal is itself sparse in the Fourier basis! We will confirm these results experimentally in Section \ref{sec:Numerical simulations}.

%
%
%
%
\subsection{Related work}
\label{sub:related}

Let us acknowledge that the techniques proposed in \cite{do08, do11, romberg09, tropp10, krahmer11, tropp11} can be related to the spread spectrum technique. The benefit of a random pre-modulation in the measurement system is already briefly suggested in \cite{do08}. { The proofs of the claims presented in that conference paper have very recently been accepted for publication in \cite{do11} during the review process of this paper. The authors obtain similar recovery results as those presented here.} In \cite{romberg09}, the author proposes to convolve the signal $\bm x$ with a random waveform and randomly under-sample the result in time-domain. The random convolution is performed through a random pre-modulation in the Fourier domain and the signal thus spreads in time-domain. In our setting, this method actually corresponds to taking $\ma \Phi$ as the Fourier matrix and $\ma \Psi$ as the composition of the Fourier matrix and the initial sparsity matrix. In \cite{tropp10}, the authors propose a technique to sample signals sparse in the Fourier domain. They first pre-modulate the signal by a random sequence, then apply a low-pass antialiasing filter, and finally sample it at low rate. Finally, random pre-modulation is also used in \cite{krahmer11} and \cite{tropp11} but for dimension reduction and low dimensional embedding.

We recover similar results, albeit in a different way. We also have a more general interpretation. In particular, we proved that changing the sensing matrix from the Fourier basis to the Hadamard does not change the recovery condition (\ref{eq:spread spectrum universality}). 

%
%
%
%
\section{Numerical simulations}
\label{sec:Numerical simulations}

In this section, we confirm our theoretical predictions by showing, through a numerical analysis of the phase transition of the $\ell_1$-minimization problem, that the spread spectrum technique is universal for the Fourier and Hadamard sensing bases.

\begin{figure}
\centering
\includegraphics[width=14cm,keepaspectratio]{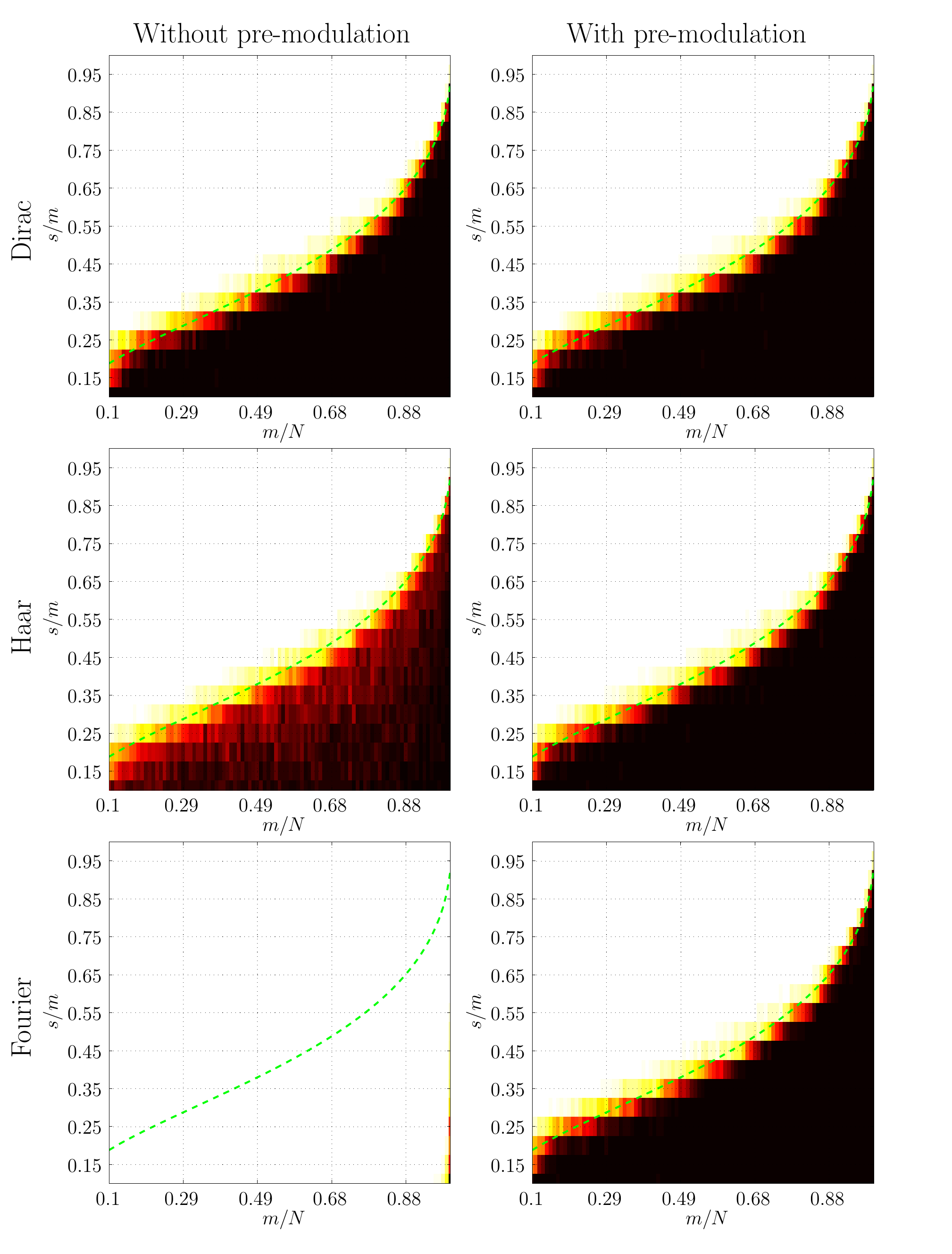}
\caption{\label{fig:phase transition fourier}Phase transition of the $\ell_1$-minimization problem for different sparsity bases and random selection of \textbf{Fourier} measurements without (left panels) and with (right panels) random modulation. The sparsity bases considered are the Dirac basis (top), the Haar wavelet basis (center), and the Fourier basis (bottom). The dashed green line indicates the phase transition of Donoho-Tanner \cite{donoho09}. The color bar goes from white to black indicating a probability of recovery from $0$ to $1$.}
\end{figure} 
\begin{figure}
\centering
\includegraphics[width=14cm,keepaspectratio]{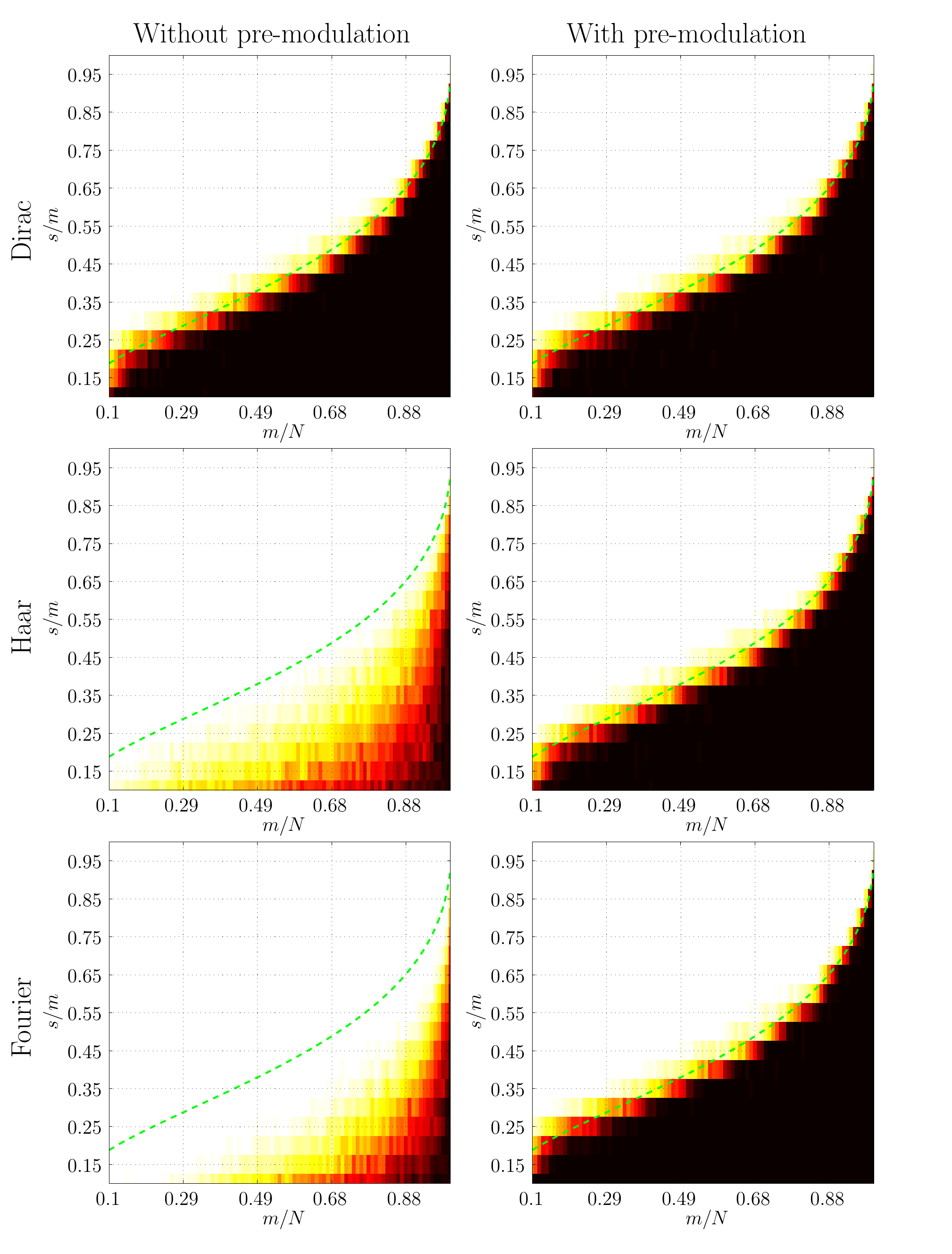}
\caption{\label{fig:phase transition hadamard}Phase transition of the $\ell_1$-minimization problem for different sparsity bases and random selection of \textbf{Hadamard} measurements without (left panels) and with (right panels) random modulation. The sparsity bases considered are the Dirac basis (top), the Haar wavelet basis (center), and the Fourier basis (bottom). The dashed green line indicates the phase transition of Donoho-Tanner \cite{donoho09}. The color bar goes from white to black indicating a probability of recovery from $0$ to $1$.}
\end{figure} 
%

%
%
%
%
\subsection{Settings}
For the first set of simulations, we consider the Dirac, Fourier and Haar wavelet bases as sparsity basis $\ma{\Psi}$ and choose the Fourier basis as the sensing matrix $\ma{\Phi}$. We generate complex $s$-sparse signals of size $N=1024$ with $s \in \left\{1, \ldots, N \right\}$. The positions of the non-zero coefficients are chosen uniformly at random in $\left\{1, \ldots, N \right\}$, their phases are set by generating a Steinhaus sequence, and their amplitudes follow a uniform distribution over $\left[0, 1 \right]$. The signals are then probed according to relation (\ref{eq:measurement model}) or (\ref{eq:spread measurement matrix}) and reconstructed from different number of measurements $m \in \left\{s, \ldots, 10s \right\}$ by solving the $\ell_1$-minimization problem (\ref{eq:BP}) with the SPGL$1$ toolbox \cite{berg08, spgl1}. For each pair $(m, s)$, we compute the probability of recovery\footnote{perfect recovery is considered if the $\ell_2$ norm between the original signal $\bm{x}$ and the reconstructed signal $\bm{x}^\star$ satisfy: $\norm{\bm{x}-\bm{x}^\star}_2\leq10^{-3}\norm{\bm{x}}_2$.} over $100$ simulations.

For the second set of simulations, the same protocol is applied with the same sparsity basis but with the Hadamard basis as the sensing matrix $\ma \Phi$.

%
%
%
%
\subsection{Results}

Figure \ref{fig:phase transition fourier} shows the phase transitions of the $\ell_1$-minimization problem obtained for sparse signals in the Dirac, Haar and Fourier sparsity bases and probed in the Fourier basis with and without random pre-modulation. Figure \ref{fig:phase transition hadamard} shows the same graphs but with measurements performed in the Hadamard basis.

In the absence of pre-modulation, one can note that the phase transitions depend on the mutual coherence of the system as predicted by Theorem \ref{th:standard uniform recovery}. For the pairs Fourier-Dirac and Hadamard-Dirac, the mutual coherence is optimal and the experimental phase transitions match the one of Donoho-Tanner (dashed green line) \cite{donoho09}. For all the other cases, the coherence is not optimal and the region where the signals are recovered is much smaller. The worst case is obtained for the pair Fourier-Fourier for which $\mu = 1$.

In the presence of pre-modulation, Corollary \ref{co:spread spectrum universality} predicts that the performance should not depend on the sparsity basis and should become optimal. It is confirmed by the phases transition showed on Figures \ref{fig:phase transition fourier} and \ref{fig:phase transition hadamard} as they all match the phase transition of Donoho-Tanner, even for the pair Fourier-Fourier!

%
%
%
%
\section{Application to realistic Fourier imaging}
\label{sec:analog spread spectrum}

In this section, we discuss the application of the spread spectrum technique to realistic analog Fourier imaging such as radio interferometric imaging or MRI. Firstly, we introduce the exact sensing matrix needed to account for the analog nature of the imaging problem. Secondly, while our original theoretical results strictly hold only in a digital setting, we derive explicit performance guarantees for the analog version of the spread spectrum technique. We also confirm on the basis of simulations that the spread spectrum technique drastically enhances the quality of reconstructed signals.

%
%
%
%
\subsection{Sensing model}

Radio interferometry dates back to more than sixty years ago \cite{ryle46, ryle59, ryle60, thompson01}. It allows observations of the sky with angular resolutions and sensitivities inaccessible with a single telescope. In a few words, radio telescope arrays synthesize the aperture of a unique telescope whose size would be the maximum projected distance between two telescopes of the array on the plane perpendicular to line of sight. Considering small field of views, the signal probed can be considered as a planar image on the plane perpendicular to the pointing direction of the instrument. Measurements are obtained through correlation of the incoming electric fields between each pair of telescopes. As showed by the van Cittert-Zernike theorem \cite{thompson01}, these measurements correspond to the Fourier transform of the image multiplied by an illumination function. In general, the number of spatial frequencies probed are much smaller than the number of frequencies required by the Nyquist-Shannon theorem, so that the Fourier coverage is incomplete. An ill-posed inverse problem is thus defined for reconstruction of the original image. To address this problem, approaches based on compressed sensing have recently been developed \cite{wiaux09, wenger10, li11}.

MR images are created by nuclear magnetic resonance in the tissues to be imaged. Standard MR measurements take the form of Fourier (also called $k$-space) coefficients of the image of interest. These measurements are obtained by application of linear gradient magnetic fields that provides the Fourier coefficient of the signal at a spatial frequency proportional the gradient strength and its duration of application. Accelerating the acquisition process, or equivalently increasing the achievable resolution for a fixed acquisition time, is of major interest for MRI applications. To address this problem, recent approaches based on compressed sensing seek to reconstruct the signal from incomplete information. In this context, several approaches have been designed \cite{lustig07, sebert08, liang09b, wang09, seeger10, haldar10, puy11b, wiaux10, puy11a}.

In light of the results of Section \ref{sec:discrete spread spectrum}, Fourier imaging is a perfect framework for the spread spectrum technique, apart from the analog nature of the corresponding imaging problems. In the quoted applications, the random pre-modulation is replaced by a linear chirp pre-modulation \cite{wiaux09b, wiaux09a, puy10a, wiaux10, puy11a, puy11b}. In radio interferometry, this modulation is inherently part of the acquisition process \cite{wiaux09b, wiaux09a}. In MRI, it is easily implemented through the use of dedicated coils or RF pulses \cite{puy11a, puy11b}. For two-dimensional signals, the linear chirp with chirp rate $w \in \Rbb$ reads as a complex-valued function $c : \tau \mapsto {\rm e}^{ {\rm i} \pi w \tau^2 }$ of the spatial variable $\tau \in\Rbb^2$. Note that for high chirp rates $w$, i.e. for chirp whose band-limit is of the same order of the band-limit of the signal, this chirp shares the following important properties with the random modulation: it is a wide-band signal which does not change the norm of the signal $x$, as $\abs{c(\tau)} = 1$ whatever $\tau \in \Rbb^2$.

In this setting, the complete linear relationship between the signal and the measurements is given by
\begin{eqnarray}
\label{eq:analog measurement matrix}
\bm y = \ma{A}^{w}_\Omega \, \bm \alpha \text{ with } \ma{A}^{w}_\Omega = \ma{F^*_\Omega CU \Psi} \in \Cbb^{m \times N}.
\end{eqnarray}
In the above equation, the matrix $\ma{U}$ represents an up-sampling operator needed to avoid any aliasing of the modulated signal due to a lack of sampling resolution in a digital description of the originally analog problem. The convolution in Fourier space induced by the analog modulation implies, in contrast with the digital setting studied before, that the band limit of the modulated signal is the sum of the individual band limits of the original signal and of the chirp $c$. { We assume here that, on its finite field of view $L$, the signal $x$ is approximately band-limited with a cut-off frequency at $B$, i.e., its energy beyond the frequency $B$ is negligible. The signal $x$ is thus discretized on a grid of $N = 2LB$ points.} On this field of view $L$, the linear chirp $c$ may be approximated by a band limited function of band limit identified by its maximum instantaneous frequency $\vert w \vert L/2$. This band limit can also be parametrized in terms of a discrete chirp rate $\bar{w} = w L^2/N$ and thus $\vert w \vert L/2 = \abs{\bar{w}} B$. Therefore, an up-sampled grid with at least $N_{w}=(1+\vert\bar{w}\vert)N$ points needs to be considered and the modulated signal is correctly obtained by applying the chirp modulation on the signal after up-sampling on the $N_{w}$ points grid\footnote{In a full generality, natural signals are not necessarily band-limited. The spread spectrum technique can easily be adapted to this case. The sensing model should simply be modified to account for the fact that, if measurements are performed at frequencies up to a band limit $B$, they unavoidably contain energy of the signal up to band limit $(1+\bar{w})B$.}. The up-sampling operator $\ma{U}$, implemented in Fourier space by zero padding, is of size ${N_{w}\times N}$ and satisfies $\ma{U}^*\ma{U} = \ma{I} \in \Cbb^{N \times N}$. Finally, the matrix $\ma{C}\in\Cbb^{N_{w}\times N_{w}}$ is the diagonal matrix implementing the chirp modulation on this up-sampled grid and the matrix $\ma{F} = \left(\bm{f}_i\right)_{1 \leq i \leq N_{w}}\in\Cbb^{N_{w}\times N_{w}}$ stands for the discrete Fourier basis on the same grid. The indices $\Omega = \left\{l_1, \ldots, l_m \right\}$ of the Fourier vectors selected to probe the signal are chosen independently and uniformly at random from $\left\{1, \ldots, N_{w}\right\}$.

%
%
%
%
\subsection{Illustration}

\begin{figure}
\centering
\includegraphics[width=16cm,keepaspectratio]{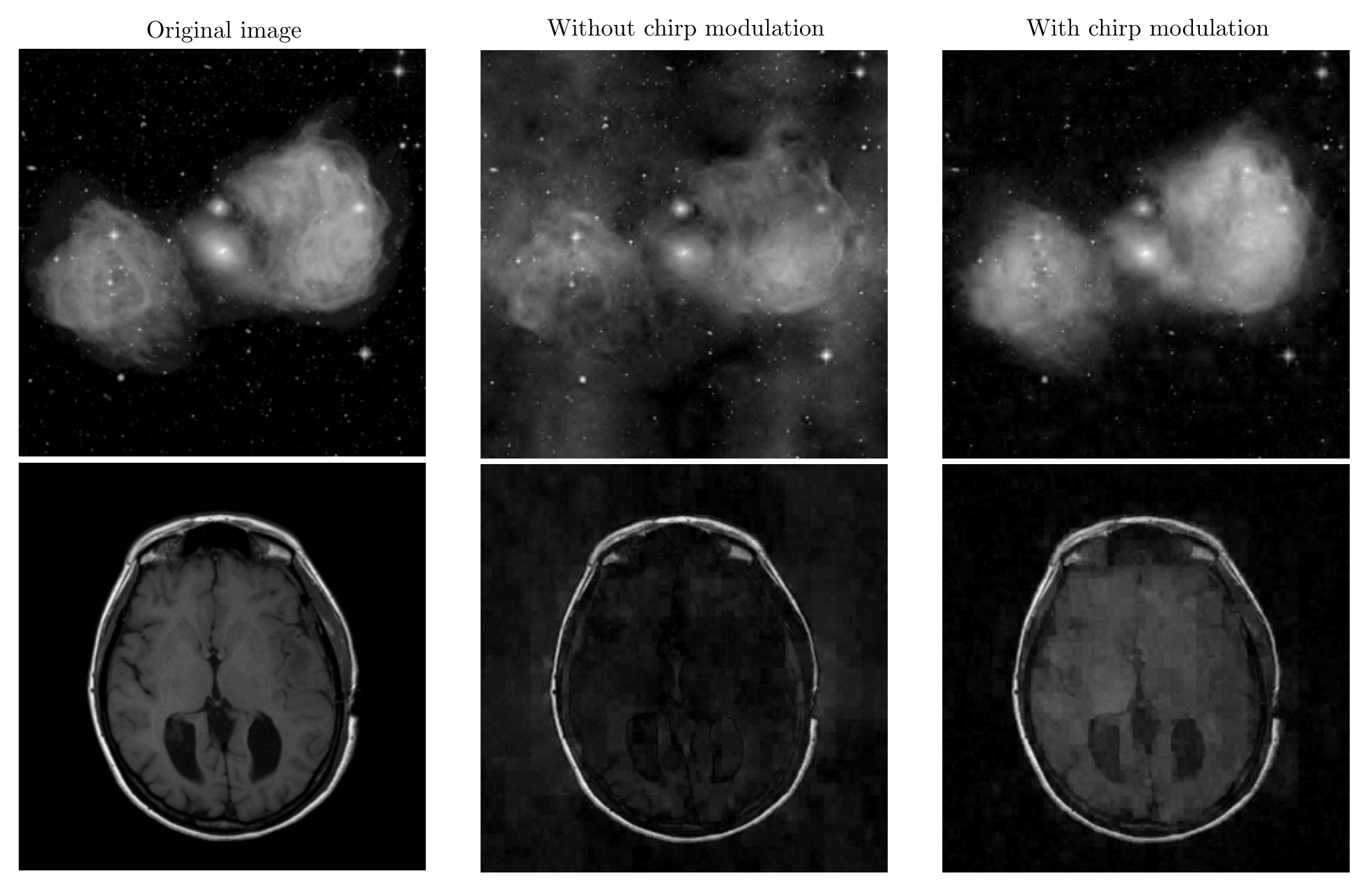}
\caption{\label{fig:galaxy} Top panels: Image of the giant elliptical galaxy NGC1316 (center of the image) \emph{devouring} its small northern neighbor. The image shows the radio emission associated with this encounter superimposed on an optical image. The radio emission was imaged using the Very Large Array in New Mexico (Image courtesy of NRAO/AUI and J. M. Uson). The image size is $N= 256 \times 256$. Bottom panels: MRI image of a brain from the B{\sc rainix} database. From left to right: original image; complex magnitudes of the reconstructed images from $m = 0.4 N$ measurements without chirp modulation; complex magnitudes of the reconstructed images from $m = 0.4 N$ measurements with chirp modulation.}
\end{figure} 

Up to the introduction of the matrix $\ma{U}$ and the substitution of the linear chirp modulation for the random modulation, we are in the same setting as the one studied in Section \ref{sec:discrete spread spectrum}. To illustrate the effectiveness of the spread spectrum technique, we consider two images of size $N = 256 \times 256$ showed in Figure \ref{fig:galaxy}. { The first image shows the radio emission associated with the encounter of a galaxy with its northern neighbor. It was acquired with the Very Large Array in New Mexico \cite{vla}. The second image shows a brain acquired in an MRI scanner. This image is part of the B{\sc rainix} database \cite{brainix}. These images are probed according to relation (\ref{eq:analog measurement matrix}) in the absence ($\bar{w} = 0$) and presence ($\bar{w} = 0.1$) of a linear chirp modulation. Independent and identically distributed Gaussian noise with zero-mean is also added to the measurements. The variance $\sigma^2$ of the noise is defined such that the input ${\rm snr} = - 10 \log_{10}\left(\sigma^2/\Sigma^2_{\bm{x}}\right)$ is $30\,{\rm dB}$ ($\Sigma_{\bm{x}}$ stands for the sampled standard deviation of $\bm{x}$). The images are reconstructed from $m = 0.4 N$ complex Fourier measurements by solving the $\ell_1$-minimization problem (\ref{eq:BPDN}). Note that in order to stay in the setting of the theorems presented so far, no reality constraint is enforced in the reconstructions, so that the reconstructed images are complex valued. The sparsity bases $\ma{\Psi}$ used are the Daubechies-6 and Haar wavelet bases for the galaxy and the brain respectively. In each case, $20$ reconstructions are performed for different noise and mask realizations. The complex magnitudes of reconstructed images with median mean squared errors are presented in Figure \ref{fig:galaxy}.}

In the absence of linear chirp modulation, the quality of the reconstructed image is very low. However, one can already note that the fine scale structures are much better reconstructed than those at large scales. The fine details live at the small scales of the wavelet decomposition whereas the large structures live at larger scales. The small scale wavelets being more incoherent with the Fourier basis than the larger wavelets, the high frequency details are naturally better recovered.

In the presence of the linear chirp modulation, all the wavelets in $\ma{\Psi}$ become optimally incoherent with the Fourier basis thanks to the universality of the spread spectrum technique. Consequently, as one can observe on Figure \ref{fig:galaxy}, the low and high frequency details are better reconstructed and the image quality is drastically enhanced. { Note that much better reconstructions can be obtained for the brain image by substituting the Total Variation norm\footnote{$\ell_1$ norm of the magnitude of the gradient.} for the $\ell_1$ norm in (\ref{eq:BPDN}) \cite{lustig07, puy11b}. However, Theorems \ref{th:standard uniform recovery} and \ref{th:spread spectrum uniform recovery} do not hold for such a norm.}

Let us acknowledge that these simulations are not fully realistic. For example, in radio-interferometry, the spatial frequency cannot be chosen at random. To simulate realistic acquisitions, one would have to consider non-random measurements in the continuous Fourier plane. Such a study is beyond the scope of this work. However, in the context of MRI, part of the authors implemented and tested this technique on a real scanner with \emph{in vivo} acquisitions \cite{puy11a, puy11b}.

%
%
%
%
\subsection{Modified recovery condition}

Because of the modifications introduced in (\ref{eq:analog measurement matrix}) to account for the analog nature of the problem, the digital theory associated with the measurement matrix (\ref{eq:spread measurement matrix}) does not explicitly apply. Nevertheless, the previous illustration shows that the spread spectrum technique is indeed still very effective in this analog setting. Actually, performance guarantees similar to Theorem \ref{th:standard uniform recovery} may be obtained in this setting. 

\begin{theorem}
\label{th:analognonunifrec}
Let $\ma{A}^{w} = \ma{F^* CU \Psi} \in \Cbb^{m \times N}$, $\mu_{w} = \max_{1 \leq i,j \leq N} \left| \scp{\bm{f}_i}{\ma{CU}\bm{\psi}_j}\right|$, $\bm{\alpha} \in \Cbb^N$ be an $s$-sparse vector, $\Omega = \left\{l_1, \ldots, l_m \right\}$ be a set of $m$ indices chosen independently and uniformly at random from $\left\{1, \ldots, N_{w}\right\}$, and $\bm{y} = \ma{A}^{\rm c}_\Omega \bm{\alpha} \in \Cbb^{m}$. For some universal constants $C>0$ and $\gamma>1$, if the number of measurements $m$ satisfies
\begin{eqnarray}
\label{eq:ananonunif}
m \geq C N_{w} \, \mu_{w}^2 s \log^4(N),
\end{eqnarray}
then $\bm \alpha$ is the unique minimizer of the $\ell_1$-minimization problem (\ref{eq:BP}) with probability at least $1-N^{-\gamma \log^3(N)}$.
\end{theorem}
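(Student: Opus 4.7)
My plan is to reduce the statement to the general bounded orthonormal system (BOS) recovery argument that already underlies Theorem \ref{th:standard uniform recovery}. The key observation is that although $\ma{A}^{w} = \ma{F^*CU\Psi}$ is rectangular of size $N_w \times N$, its $N$ columns still form an orthonormal system in $\Cbb^{N_w}$ with uniformly bounded entries, and its $m$ rows are drawn i.i.d.\ uniformly from $\{1,\ldots,N_w\}$; this is exactly the setting that the BOS machinery requires.

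The first step is a short direct calculation. Using $\ma{\Psi}^*\ma{\Psi} = \ma{I}$, $\ma{U}^*\ma{U} = \ma{I}$, $\ma{C}^*\ma{C} = \ma{I}$ (valid because $\abs{c(\tau)} = 1$), and $\ma{F}\ma{F}^* = \ma{I}$, one obtains $(\ma{A}^w)^*\ma{A}^w = \ma{\Psi}^*\ma{U}^*\ma{C}^*\ma{F}\ma{F}^*\ma{CU\Psi} = \ma{I}$. Combined with $\max_{l,i}|(\ma{A}^w)_{li}| = \mu_w$ by definition, this identifies the associated BOS with boundedness constant $K = \sqrt{N_w}\,\mu_w$.

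Next I would invoke the BOS version of Theorem 4.4 in \cite{rauhut10}, whose proof controls restricted isometry constants through the quantity $K^2 s$ and never exploits squareness of the synthesis matrix. This yields uniform recovery by (\ref{eq:BP}) provided $m \geq C\,K^2\,s\,\log^4(N_w) = C\,N_w\,\mu_w^2\,s\,\log^4(N_w)$, with probability at least $1-N_w^{-\gamma\log^3(N_w)}$. Since $N_w = (1+|\bar w|)N$ for a fixed chirp rate, one has $\log N_w = O(\log N)$, so the $\log$ factors and the $(1+|\bar w|)$ prefactor can be absorbed into the universal constant, giving condition (\ref{eq:ananonunif}) with probability at least $1-N^{-\gamma\log^3(N)}$.

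The only point deserving care is the first step: one must be sure that the BOS statement cited from \cite{rauhut10} allows a non-square synthesis matrix, which it does in its standard formulation. Should one prefer to apply Theorem \ref{th:standard uniform recovery} verbatim, an equivalent route is to complete $\ma{CU\Psi}$ to an $N_w \times N_w$ unitary matrix, zero-pad the $s$-sparse $\bm{\alpha}$ to a vector in $\Cbb^{N_w}$, and apply the theorem to this extended system; this requires picking the orthogonal complement so that the extended mutual coherence remains of order $\mu_w$, which can be arranged but is the only slightly technical point of the argument. Beyond this verification, the proof is essentially a bookkeeping exercise on top of the BOS framework.
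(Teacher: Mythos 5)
Your proposal is correct and follows essentially the same route as the paper: both verify that the (suitably normalized) columns of $\ma{A}^{w}$ form a bounded orthonormal system with respect to the uniform measure on $\left\{1,\ldots,N_{w}\right\}$, identify the boundedness constant $K=N_{w}^{1/2}\mu_{w}$ via $(\ma{A}^{w})^*\ma{A}^{w}=\ma{I}$ and $\ma{FF}^*=\ma{I}$, and then apply Theorem 4.4 of \cite{rauhut10} directly to this rectangular system. The alternative unitary-completion route you sketch is unnecessary, as the paper likewise relies on the fact that the cited theorem does not require a square synthesis matrix.
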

\begin{proof}
The proof follows directly from Theorem $4.4$ in \cite{rauhut10}. Indeed, Theorem $4.4$ applies to any matrices $\ma{A}_\Omega$ associated to an orthonormal system (with respect to the probability measure used to draw $\Omega$) that satisfies the so-called boundedness condition (see Section $4.1$ of \cite{rauhut10} for more details). 

Let us denote $\tilde{\ma{A}}^{w} = (\tilde{a}_{ik})_{1 \leq i,k \leq N} = N_{w}^{1/2} \; \ma{A}^{w}$ the normalized measurement matrix. It is easy to check that this new matrix is orthonormal relative to the discrete uniform probability measure on $\left\{1, \ldots, N_{w}\right\}$. Indeed
\begin{eqnarray*}
\sum_{i=1}^{N_{w}} \tilde{a}_{ik}^* \tilde{a}_{ij} \, N_{w}^{-1} = 
\sum_{i=1}^{N_{w}} \left(N_{w}^{1/2} \bm{f}^*_i\ma{CU}\bm{\psi}_j\right)^* \, \left(N_{w}^{1/2} \bm{f}^*_i\ma{CU}\bm{\psi}_k\right) N_{w}^{-1} = \bm{\psi}_j^* \ma{U^* C^* FF^*CU}\bm{\psi}_k =\delta_{jk},
\end{eqnarray*}
as $\ma{U^*U} = \ma{I} \in \Cbb^{N \times N}$ and $\ma{C^*C} = \ma{FF^*} = \ma{I} \in \Cbb^{N_{w} \times N_{w}}$. Furthermore, the boundedness condition is satisfied as $\max_{1 \leq i, j \leq N} \vert \tilde{a}^{w}_{ij} \vert \leq N_{w}^{1/2} \mu_{w}$. Applying Theorem $4.4$ in \cite{rauhut10} to the matrix $\tilde{\ma{A}}^{w}$ terminates the proof.
\end{proof}

Note that Theorem $4.4$ in \cite{rauhut10} also ensures that our analog sensing scheme is stable relative to noise and non exact sparsity if condition (\ref{eq:ananonunif}) is satisfied. Also note that one can obtain a similar results using Theorem $1.1$ and $1.2$ of the very recently accepted paper \cite{candes11}.

In view of this theorem, one can notice that the number of measurements needed for accurate reconstructions of sparse signals is proportional to the sparsity $s$ times the product $N_{w}\,\mu_{w}^2$, with factors depending on the chirp rate $\bar{w}$. In the analog framework, two effects are actually competing. On the one hand, the mutual coherence $\mu_{w}^2$ of the system is decreasing with the spread spectrum phenomenon, but, on the other hand, the number of accessible frequencies $N_{w}$ that bear information is increasing linearly with the chirp rate. The optimal recovery conditions are reached for a chirp rate $\bar{w}$ that ensures that the product $N_{w}\,\mu_{w}^2$ is at its minimum. This minimum might change depending on the sparsity matrix, so the universality of the recovery is formally lost.

To illustrate this effect, Table \ref{tab:mu} shows values of the product $N_{w} \, \mu_{w}^2$ for different chirp rates $\bar{w}$ and two different sparsity matrices: the Fourier and the Dirac bases. The values are computed numerically for a size of signal $N=1024$. In the case of the Dirac basis, one can notice that the product $N_{w} \, \mu_{w}^2$ slightly increases with the chirp rate, thus predicting that the performance should even slightly diminish in the presence of a chirp. On the contrary, the product is drastically reduced for the Fourier basis as $\bar{w}$ increases, predicting a significant performance improvement in the presence of chirp modulation.

%
%
%
%
\subsection{Experiments}

\begin{figure}
\centering
\includegraphics[width=15cm,keepaspectratio]{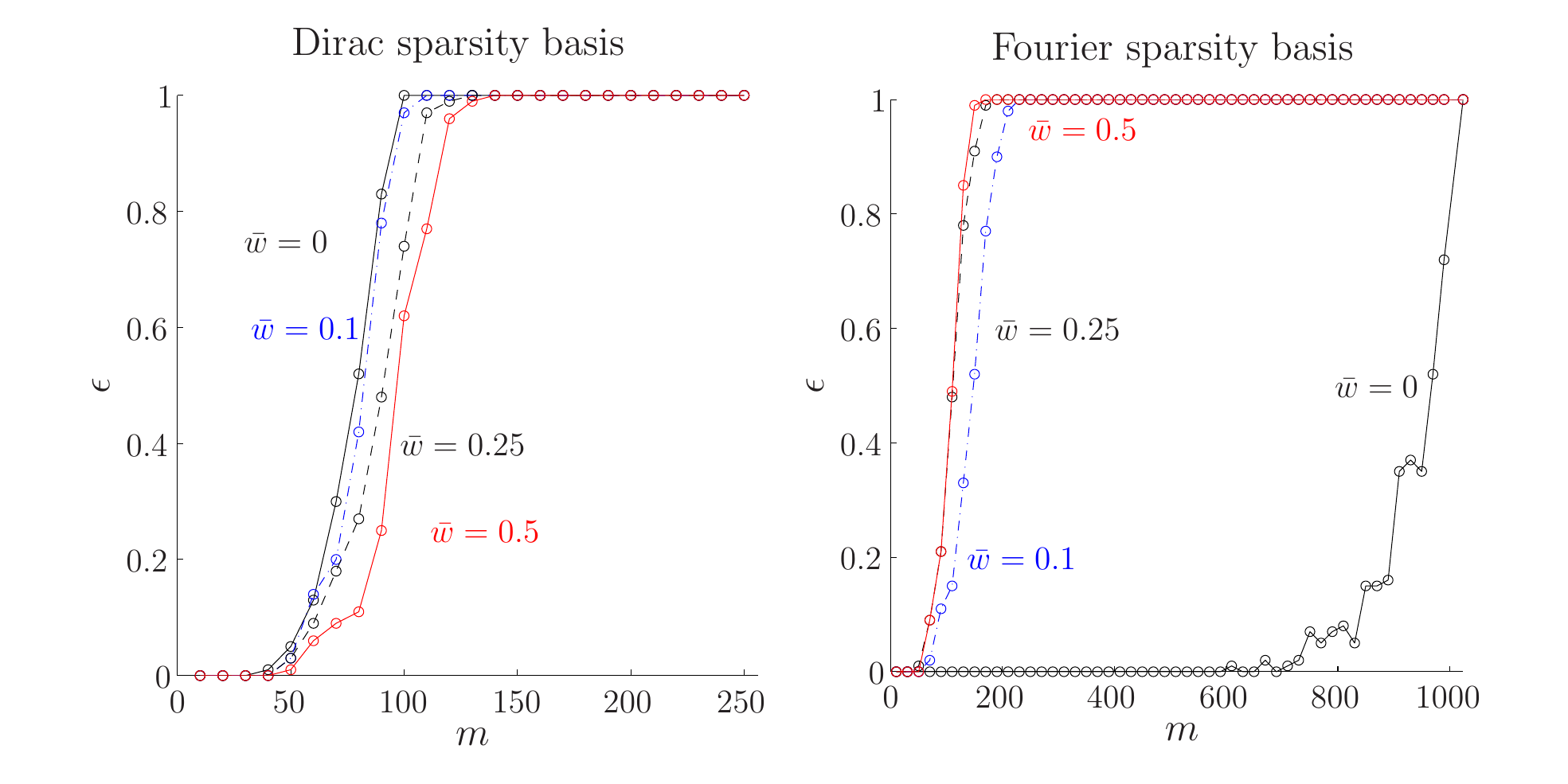}
\caption{\label{fig:prob recovery analog}Probability of recovery $\epsilon$ of $10$-sparse signals as a function of the number measurement $m$ obtained with the measurement matrix (\ref{eq:analog measurement matrix}) for two different sparsity basis: the Dirac basis (left) and the Fourier basis (right). The continuous black curve corresponds to the probability of recovery for $\bar{w} = 0$. The dot-dashed blue curve corresponds to the probability of recovery for $\bar{w} = 0.1$. The dashed black curve corresponds to the probability of recovery for $\bar{w} = 0.25$. The continuous red curve corresponds to the probability of recovery for $\bar{w} = 0.5$.}
\end{figure} 
\begin{table}[b]
\renewcommand{\arraystretch}{1.3}
\centering
\begin{tabular}{ccc}
\hline 
Sparsity basis 	& Dirac & Fourier \\
\hline\hline
$N_{w} \, \mu_{w}^2$ at $\bar{w}=0$ & $1.00$ &  $1.02 \cdot 10^3$\\
$N_{w} \, \mu_{w}^2$ at $\bar{w}=0.10$ & $2.58$ &  $1.54 \cdot 10^1 $  \\
$N_{w} \, \mu_{w}^2$ at $\bar{w}=0.25$ & $3.15$ & $6.95$ \\
$N_{w} \, \mu_{w}^2$ at $\bar{w}=0.50$ & $3.46$ & $4.13$ \\
\hline
\end{tabular}
\caption{\label{tab:mu} Influence of a chirp modulation on $N_{w} \, \mu_{w}^2$.}
\end{table}

To confirm the theoretical predictions of the previous section, we consider the Dirac and Fourier bases as sparsity matrices $\ma \Psi$. We then generate complex $s$-sparse signals of size $N=1024$ with $s = 10$. The positions of the non-zero coefficients are chosen uniformly at random in $\left\{1, \ldots, N \right\}$, their signs are set by generating a Steinhaus sequence, and their amplitudes follow a uniform distribution in $\left[0, 1\right]$. The signals are then probed according to relation (\ref{eq:analog measurement matrix}) and reconstructed from different number of measurements $m \in \left\{s, \ldots, N \right\}$ by solving the $\ell_1$-minimization problem (\ref{eq:BP}) with the SPGL$1$ toolbox. For each pair $(m, s)$, we compute the probability of recovery over $100$ simulations for different chirp rate $\bar{w} \in \left\{0, 0.1, 0.25, 0.5 \right\}$. Figure \ref{fig:prob recovery analog} shows the probability of recovery $\epsilon$ as a function of the number of measurements.

Firstly, in the case where $\ma \Psi$ is the Dirac basis, one can notice that the number of measurements needed to reach a probability of recovery of $1$ slightly increases with the chirp rate $\bar{w}$. This is in line with the value in Table \ref{tab:mu} and Theorem \ref{th:analognonunifrec}. 

Secondly, in the case where $\ma \Psi$ is the Fourier basis, the performance becomes much better in the presence of a chirp. As predicted by the value in Table \ref{tab:mu}, the improvement is drastic when $\bar{w}$ goes from $0$ to $0.1$ and then starts to saturate between $0.1$ and $0.5$.

Thirdly, according to Table \ref{tab:mu}, the product $N_{w} \, \mu_{w}^2$ is equal to $4.13$ for the Fourier basis when $\bar{w}=0.5$. This is nearly the value obtained with the Dirac basis for the same chirp rate, suggesting the same probability of recovery for the same number of measurements. Indeed, one can notice on Figure \ref{fig:prob recovery analog} that the number of measurement needed to reach a probability of recovery of $1$ is around $100$ in both cases. 

Finally, these results also suggest that the spread spectrum technique in the modified setting is almost universal in practice. Indeed, for the perfectly incoherent pair Fourier-Dirac of sensing-sparsity bases, the number of measurements needed for perfect recovery is around $100$ and this number remains almost unchanged in presence of the linear chirp modulation. Furthermore, for the pair Fourier-Fourier, the spread spectrum technique allows to reduce the number of measurements for perfect recovery close to this optimal value.

%
%
%
%
\section{Conclusion}
\label{sec:conclusion}

We have presented a compressed sensing strategy that consists of a wide bandwidth pre-modulation of the signal of interest before projection onto randomly selected vectors of an orthonormal basis. In a digital setting with a random pre-modulation, the technique was proved to be universal for sensing bases such as the Fourier or Hadamard bases, where it may be implemented efficiently.  Our results were confirmed through a numerical analysis of the phase transition of the $\ell_1$-minimization problem for different pairs of sensing and sparsity bases. 

The spread spectrum technique was also shown to be of great interest for realistic analog Fourier imaging. In applications such as radio interferometry and MRI, the originally digital random pre-modulation may be mimicked by an analog linear chirp. Explicit performance guarantees for the analog version of the technique with a chirp modulation were derived. It shows that recovery results are still enhanced in this setting, though universality does not strictly hold anymore. Numerical simulations have shown that the quality of reconstructed signals is drastically enhanced in this more realistic setting, also for pairs of sensing-sparsity bases initially highly coherent, such as the Fourier-Fourier pair.

%
%
%
%
\section{Acknowledgments}
\ifthenelse{\boolean{publ}}{\small}{}
This work was supported in part by the Center for Biomedical Imaging (CIBM) of the Geneva and Lausanne Universities, EPFL, and the Leenaards and Louis-Jeantet foundations, in part by the Swiss National Science Foundation (SNSF) under grant PP00P2-123438, also by the EU FET-Open project FP7-ICT-225913-SMALL: Sparse Models, Algorithms and Learning for Large-Scale data, and by the EPFL-Merck Serono Alliance award.

%
%
%
%
\bibliographystyle{bmc_article}  
\bibliography{EURASIP-puyetal-Spread_spectrum_universality}
\newpage
\ifthenelse{\boolean{publ}}{\end{multicols}}{}

\end{bmcformat}
\end{document}